\newtheorem{lem}{Lemma}[section]
\newtheorem{obs}[lem]{Observation}
\newtheorem{definition}{Definition}
\newtheorem{remark}{Remark}
\newtheorem{example}{Example}
\newtheorem{theorem}{Theorem}
\newtheorem{corollary}{Corollary}
\newtheorem{lemma}{Lemma}
\newtheorem{proposition}{Proposition}
\title{GSP --- The Cinderella of Mechanism Design\footnote{A preliminary work ``On the Truthfulness of GSP''~\cite{cavallo15} was presented at the 11th Ad Auctions Workshop and covered a number of results in this paper.}}
\author{
Christopher A. Wilkens\footnote{Yahoo Research, Sunnyvale, CA} \and
Ruggiero Cavallo\footnote{Yahoo Research, New York, NY} \and
Rad Niazadeh\footnote{Cornell University, Ithaca, NY. Part of this work was done while the author was an intern at Yahoo.}
}
\begin{document}

\maketitle

\begin{abstract}
Nearly fifteen years ago, Google unveiled the generalized second price (GSP) auction. By all theoretical accounts including their own~\cite{varian14}, this was the wrong auction --- the Vickrey-Clarke-Groves (VCG) auction would have been the proper choice --- yet GSP has succeeded spectacularly.

We give a deep justification for GSP's success: advertisers' preferences map to a model we call {\em value maximization}; they do not maximize profit as the standard theory would believe. For value maximizers, GSP is the truthful auction~\cite{aggarwal09}. Moreover, this implies an axiomatization of GSP --- it is an auction whose prices are truthful for value maximizers --- that can be applied much more broadly than the simple model for which GSP was originally designed. In particular, applying it to arbitrary single-parameter domains recovers the folklore definition of GSP. Through the lens of value maximization, GSP metamorphosizes into a powerful auction, sound in its principles and elegant in its simplicity.
\end{abstract}

\section{Introduction}  \label{sec:intro}

In February 2002, Google launched its pay-per-click AdWords auction with an innovation that would change the industry: while competitors used a first price model to sell ads on a search results page,\footnote{Overture.com, formerly GoTo.com and later bought by Yahoo, introduced the first sponsored search auctions in 1996 and was still the dominant player in 2002. Their auction had pay-your-bid pricing that has since been dubbed the generalized first price auction (GFP).} Google chose a second price paradigm~\cite{varian14}. The benefits were clear~\cite{edelman07}, and within 5 years, the search ads industry had standardized on Google's generalized second price (GSP) auction. There was one problem with GSP: it was not the proper generalization of Vickrey's second price auction, a title that instead belongs to the Vickrey-Clarke-Groves (VCG) mechanism. Google realized this mere months after launch, yet GSP was already humming along, too entrenched to be replaced~\cite{varian14}.

By all reasonable accounts GSP has thrived, much to the consternation of the research community. Our collective first response was simply ``it can't be too bad:'' revenue and welfare in reasonable equilibria are no worse than VCG~\cite{varian07,eos07}. Many studies of GSP followed --- a more recent justification falls out of GSP's simplicity: GSP pricing is independent of the likelihood that the user clicks on an ad,\footnote{Technically, the ``ad effect'' term in the click probability appears in both the optimization and pricing, but these can be thought of as bidder weights; analysis of GSP's equilibria does not require that they are accurate. This simplicity also makes implementation easier and is a practical reason GSP is common.} so GSP's equilibrium guarantees are immune to errors in click modeling~\cite{milgrom10,dutting16}. Yet, throughout these and many other results, the community's implicit perspective has remained constant: GSP is an ugly duckling that requires continual excuses.

We turn past analysis on its head and propose that GSP's runaway success follows from a much simpler explanation: {\em GSP is truthful}. Of course, it is well-known that VCG is truthful {\em for profit maximizers}. In reality, advertisers rarely optimize for profit, and, as we will discus, their goals and behavior are much closer to a model we call {\em value maximization}. In fact, it is quite easy to see that GSP is truthful for this class of bidders --- Aggarwal et al.~\cite{aggarwal09} note exactly this in the limiting case of a general preference model --- the challenge is to argue that advertisers' preferences {\em generally} fit our value maximizing model, which we do in two ways. First we argue based on advertisers' traditional objectives that the value maximizing model is the right minimal theoretical model. Second, we prove and empirically validate that any bidder with a moderate ROI constraint will behave equivalently to a value maximizer, making value maximization a robust choice for fitting advertiser preferences.

Moreover, our work suggests a fundamental axiomatic definition of GSP: it is the auction that maximizes the sum of advertisers' values and charges prices truthful for value maximizers. This is the first work that gives a rigorous foundation to (e.g.) generalize GSP when search ads outgrow their current model. In doing so, we justify the folklore definition that the GSP price is the minimum bid required to maintain the same allocation.

Through these results, we develop a new appreciation for GSP. It gains sound theoretical footing as a truthful auction, magnified by the fact that value maximizing preferences can be justified in multiple ways. This virtue is heightened by its elegant simplicity, making it a veritable Cinderella --- the wrong auction that turned out to be just right.

\subsection{Preferences of Online Advertisers}

Advertisers can be split into two categories: {\em brand} and {\em performance}. Brand advertisers target long-term growth and awareness. Their goal is to reach as many people as possible, generate as many clicks as possible, capture as much market share as possible, generate as much revenue as possible, and so on. In contrast, performance advertisers care about immediate results. They optimize for the sales, sign-ups, or other so-called {\em conversions} generated directly from their ads.

\paragraph{Preferences of Brand Advertisers}
Brand advertisers come to a marketplace with explicit objectives. They typically have a mandate to meet a specific business goal --- showing impressions to an audience, generating clicks, or maximizing revenue --- driven by long-term considerations instead of immediate profit. Thus cost, while important, merely enters their preferences as a constraint --- brand advertisers will have a budget and limits on what they are willing to pay on average for an impression/click/conversion, but otherwise directly optimize for their mandate.

\paragraph{Preferences of Performance Advetisers}
Performance advertisers optimize the immediate tradeoff between value and cost. Return on investment (ROI) has been the standard metric for measuring this tradeoff across all types of advertising for decades. ROI measures the ratio of the profit obtained (``return'') to the cost or price paid (``investment''), i.e., the density of profit in cost:
\[\texttt{ROI}=\frac{\texttt{Value-Price}}{\texttt{Price}}=\frac{\texttt{Profit}}{\texttt{Price}}\enspace\]
Being a density metric, unconstrained maximization of ROI is not sensible;\footnote{Unconstrained maximization of ROI would likely cause an advertiser to buy only the single cheapest impression or click available.} instead, advertisers come with an ROI constraint and optimize (e.g. maximize revenue and/or clicks). For example, the following story plays out regularly at Yahoo:
\begin{enumerate}
\item Advertiser X designates a small budget for testing a Yahoo advertising product.
\item Advertiser X measures ROI --- if X is happy with its ROI, it increases its budget hoping to {\em maintain the same ROI}; if it is unhappy, it withdraws.
\end{enumerate}
This behavior is also reflected in the standard industry tools: Google's AdWords campaign management tool buys as much advertising as possible while maintaining a target average CPC and budget (an average CPC constraint corresponds to an ROI constraint, while a marginal CPC constraint would be appropriate for profit maximization). Yahoo's display ad products also offer options with ``CPC goals'' and ``CPA goals'' that buy impressions while aiming to achieve a target average cost-per-click or cost-per-conversion respectively.

Performance advertisers' ROI-centric behavior requires comment because it {\em does not} maximize profit. Standard intuition suggests that advertisers who are actively trading off between value and cost should maximize profit, but it is easy to see that a ROI constraint fails to buy some profitable ads. One way to resolve this conflict is to conclude that the intuition is wrong, perhaps since companies often care about revenue and margins as much as profits. Another resolution is to conclude that ROI is a deeply-ingrained heuristic for maximizing the effectiveness of a fixed resource, either managing an advertising budget~\cite{borgs07,kitts04,szymanski08,zhou08,auerbach08} among auctions or between platforms (deciding how to bid between Yahoo and Google), or even managing resources within the advertiser's organization (allocating a budget between marketing and engineering). Regardless, however, performance advertisers behavior is rarely profit-maximizing.

\paragraph{Advertisers as Value Maximizers}

We have just seen that advertisers' objectives fit a simple paradigm that we call {\em value maximizing}: they maximize what they buy (value) subject to spending constraints. The simplest spending constraint is that the total price does not exceed the total value of goods purchased. This motivates our minimal theoretical model:\footnote{A similar model was discovered independently in concurrent
work by Fadaei and Bichler~\cite{fadaei16}. See related work.}
\begin{definition}[Informal]
A {\em simple value maximizer} maximizes value $v_i$ while maintaining budget balance, i.e. while keeping
payment $p_i\leq v_i$; when value is equal, a lower price is preferred.
\end{definition}

Surprisingly, this simple model even captures the auction-level preferences of advertisers who have budget and ROI constraints. Budget constraints can simply be ignored at the auction level because they are typically orders of magnitude larger than bids. In contrast, ROI constraints have a substantial effect at the auction level --- a value maximizer who requires ROI at least
$\gamma$ will require
$\frac{v_i(o)-p_i}{p_i}\geq\gamma$ and thus $p_i\leq \frac{v_i(o)}{1+\gamma}$ --- but this looks like a simple value maximizer with a different value:
\begin{obs}[Informal]
If an auction is truthful for simple value maximizers, value maximizers with ROI constraints will ``truthfully'' report $v_i'=\frac{v_i}{1+\gamma}$.
\end{obs}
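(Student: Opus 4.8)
The plan is to show that a value maximizer with value $v_i$ and ROI constraint $\gamma$ has \emph{exactly the same} preferences over outcomes as a simple value maximizer whose value is $v_i'=\frac{v_i}{1+\gamma}$, and then to transfer the truthfulness guarantee from the latter to the former. First I would write down both preference relations explicitly over outcomes $(o,p_i)$. The ROI-constrained maximizer deems an outcome feasible exactly when $p_i\leq\frac{v_i(o)}{1+\gamma}$, among feasible outcomes prefers larger $v_i(o)$, and breaks ties toward lower $p_i$. The simple value maximizer with value $v_i'$ deems an outcome feasible exactly when $p_i\leq v_i'(o)=\frac{v_i(o)}{1+\gamma}$, among feasible outcomes prefers larger $v_i'(o)=\frac{v_i(o)}{1+\gamma}$, and breaks ties toward lower $p_i$.

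Next I would observe that these two relations coincide, clause by clause. The feasibility constraints are literally the same inequality $p_i\leq\frac{v_i(o)}{1+\gamma}$. The two maximization objectives, $v_i(o)$ and $\frac{v_i(o)}{1+\gamma}$, induce the same order because $\frac{1}{1+\gamma}>0$ is a positive constant, so the rescaling preserves the ranking of every pair of outcomes; and the tie-breaking rule (lower price) is identical, with the tie events themselves coinciding since $v_i'(o_1)=v_i'(o_2)$ iff $v_i(o_1)=v_i(o_2)$. Hence the ROI-constrained value maximizer of value $v_i$ and the simple value maximizer of value $v_i'$ rank every pair of outcomes identically.

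Finally, I would transfer truthfulness across this equivalence. Because an auction maps reported bids to outcomes in a way that does not depend on a bidder's latent type, the two agents face the identical menu of achievable outcomes as a function of their own bid (holding others' reports fixed). By hypothesis the auction is truthful for simple value maximizers, so the agent with value $v_i'$ has reporting $v_i'$ as a (weakly) dominant strategy, i.e., it maximizes her preferences over that menu. Since the ROI-constrained agent has the identical preference relation and the identical menu, the same bid $v_i'$ is dominant for her as well; that is, she will ``truthfully'' report $v_i'=\frac{v_i}{1+\gamma}$.

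I do not expect a serious obstacle here, as the statement is essentially a reduction, but the step that needs care is the equivalence of preferences: specifically, verifying that the lexicographic value-then-price structure is preserved under the positive rescaling $v_i\mapsto\frac{v_i}{1+\gamma}$, and checking the implicit requirement that the report $v_i'$ lies in the auction's admissible bid space, so that the simple-value-maximizer truthfulness guarantee can legitimately be invoked at type $v_i'$.
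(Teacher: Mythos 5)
Your proposal is correct and matches the paper's own reasoning: the paper (which states this observation with only an inline argument, later formalized as Lemma~\ref{lem:svm-roi}) makes exactly your reduction, rewriting the ROI constraint $\frac{v_i-p_i}{p_i}\geq\gamma$ as $p_i\leq\frac{v_i}{1+\gamma}$ so that the agent is indistinguishable from a simple value maximizer with value $v_i'=\frac{v_i}{1+\gamma}$, whence truthfulness transfers. Your clause-by-clause verification (identical feasibility constraint, order preservation under positive rescaling, identical tie-breaking) simply fills in the details the paper leaves implicit.
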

Thus, if one na\"{i}vely runs a mechanism that is truthful for value maximizers, bidders with ROI constraints will simply report $v_i'=\frac{v_i}{1+\gamma}$, i.e. they report what they are willing to pay (the auctioneer does not need to know $\gamma$, and it does not need to be the same for all bidders). We will discuss these arguments more formally later.

\subsection{Related Work}

Aggarwal et al.~\cite{aggarwal09} design sponsored search (slot) auctions in a general matching model. In their model, each advertiser has a slot-specific value and a slot-specific maximum price; advertisers maximize profit subject to their maximum price constraints. In the limit where the values are large they call advertisers {\em maximum price} bidders, corresponding directly with our value maximizers, and observe that GSP is truthful for maximum price bidders. They treat maximum price bidders as a special corner of the market rather than as a general definition for all advertisers, and they do not build on their results to justify/interpret GSP.

A limited empirical literature studies the behavior of bidders. Varian~\cite{varian07} tests if bids can be explained as equilibria for some quasilinear bidders and finds that the errors are not large; this does not address our situation, though, because assuming bidders are value maximizers would result in zero error (everyone is simply telling the truth). Athey and Nekipelov~\cite{athey16} presuppose advertisers have broadly heterogenous objectives and model their behavior.

The idea that agents maximize value is common, though not in mechanism design. For example, the traditional studies of consumer choice and market equilibria begin with agents who pick the most-preferred bundle they can afford based on their endowment~\cite{mas-collel95}. In the context of mechanism design, only a limited literature studies non-quasilinear bidders. Concurrent work by Fadaei and Bichler introduce a related model of ``value bidder''~\cite{fadaei16}, then study a form of revenue maximization. Our model differs notably because we say bidders prefer lower prices when value is held constant, while Fadaei and Bichler imply bidders strictly prefer to spend their money. Alaei et al.~\cite{alaei11} show how Walrasian equilibria in non-quasilinear unit-demand settings can be leveraged to build ad auctions where estimation errors break quasilinearity. A few papers study general truthful mechanisms for non-quasilinear preferences~\cite{adachi13,morimoto15}. These papers develop axiomatic characterizations of a VCG-analog for multi-unit and unit-demand settings.

A few threads within the qasilinear advertising literature are also related. One line of work studies the effect of using ROI as a heuristic to set bids across different auctions~\cite{borgs07,kitts04,szymanski08,zhou08,auerbach08}. Another line of work uses ideas from behavioral economics to capture observed behavior~\cite{rong14}. Finally, a small thread studies GSP and the problem of generalizing it to new contexts. The canonical analyses of GSP in a quasilinear world are~\cite{eos07,varian07}. Attempts to generalize GSP and consider its performance outside the standard model include~\cite{cavallo14,hummel14,abrams07,aggarwal06}.

\section{Model and Preliminaries}  \label{sec:prelim}

A set of $n$ ads are to be placed on a page according to a configuration (outcome) $o\in\mathcal O$. For each configuration, the likelihood that the user clicks on ad $i$ ($i$'s allocation) is given by $x_i(o)$; when the user clicks, advertiser $i$ pays a price $p_i$.

Each advertiser has a private value per click $v_i$ and a public\footnote{We assume this is public for simplicity, but it is not necessary for the results in our paper.} complete and transitive preference $\preceq_i$ over outcome-price pairs $(o,p)$.

We will reference a few different models of preferences throughout the paper. First, we mention standard quasilinear preferences:
\begin{definition}
Preferences are {\em quasi-linear} if a bidder prefers to maximize the utility function $u_i=(v_i-p_i)x_i$, i.e.
\[(o,p_i)\succeq_i(o',p_i')\Leftrightarrow (v_i-p_i)x_i(o)\geq (v_i-p_i')x_i(o')\enspace.\]
\end{definition}

The majority of the paper will focus on value-maximizing preferences in various forms. The general value-maximizing preferences are defined for arbitrary constraints $\mathcal{C}$.

\begin{definition}
Preferences are {\em value-maximzing} subject to constraints $\mathcal{C}$ if they always prefer an outcome with higher value as long as none of its constraints are violated:
\[v_ix_i(o)>v_ix_i(o')~,~(o,p_i)\mbox{ satisfies }\mathcal{C}\quad\Rightarrow\quad (o,p_i)\succ_i(o',p_i')\enspace,\]
and when the value (allocation) is the same, a lower price is preferred:
\[x_i(o)=x_i(o')~,~p_i<p_i'~,~(o,p_i)\mbox{ satisfies }\mathcal{C}\Rightarrow (o,p_i)\succ_i(o',p_i')\enspace.\]
Moreover, any outcome is preferred to the one violating $\mathcal{C}$:
\[(o',p'_i)\mbox{ violates }\mathcal{C}\Rightarrow (o,p_i)\succeq_i(o,p_i')\enspace.\]
\end{definition}

We will primarily work with two specific types of monotone constraints:
\begin{definition}
A bidder has {\em simple value-maximizing preferences} if the bidder is value-maximizing subject to the constraint that $(o,p_i)$ must satisfy $p_i\leq v_i$. 
\end{definition}
\begin{definition}
A bidder has {\em value-maximizing preferences with an ROI constraint $\gamma$} if the bidder is value-maximizing with the constraint $\frac{v_i-p_i}{p_i}\geq\gamma$.
\end{definition}

\paragraph{Auctions and Truthfulness}

An auction asks bidders to report their value parameters and specifies, for bids $\mathbf{b}$ submitted by the bidders, an outcome $f(\mathbf{b})$ and payments $p_i(\mathbf{b})$. The function $f$ is known as the {\em social choice function} and implies an allocation $x_i(\mathbf{b})$.

A mechanism is {\em truthful} if reporting a bidder's true private information (e.g. bidding $b_i=v_i$) is a dominant strategy:

\begin{definition}
\label{def:dsic}
A mechanism $\mathcal M=(f,\mathbf{p})$ is {\em dominant strategy incentive compatible} (DSIC) if and only if 
\[\forall v_i, b,:\quad (f(b_{-i},v_i),p_i(b_{-i},v_i)) \succeq_i (f(b_{-i},b_i),p_i(b_{-i},b_i)) \]
\end{definition}

In this paper, we will have trouble in continuous bid spaces when truthful bidding would induce a tie; to solve this we introduce a slight weakening of DSIC that allows arbitrary behavior on a set of bids that ``never occur'':

\begin{definition}
\label{def:dsic-ae}
A mechanism $\mathcal M=(x,\mathbf{p})$ is {\em dominant strategy incentive compatible almost everywhere} (DSIC-AE) if, for any $i$ and bids $b_{-i}$, we have 
\[ (f(b_{-i},v_i),p_i(b_{-i},v_i)) \succeq_i (f(b_{-i},b_i),p_i(b_{-i},b_i)) \]
for all value functions $v_i$ and deviations $b_i$, except a set of $v_i$ with measure zero.
\end{definition}
We loosely use truthful to describe either DSIC or DSIC-AE.

\begin{remark}
Adding a minimum bid increment would eliminate the need for DSIC-AE (the next-highest bid is always well-defined) at a high cost in readability.
\end{remark}

\paragraph{Advertising Slot Auctions and GSP}

We will use with the standard
separable click-through-rate (CTR) framework used to study sponsored search
auctions. In the separable model, the $n$ ads compete for $m$ slots. There exist
$(\alpha_1,\ldots,\alpha_m)$ and $(\beta_1,\ldots,\beta_n)$ such that, when ad
$i$ is shown in slot $j$, the user clicks on it with probability:
\[\Pr[\mbox{click on ad $i$ when shown in slot $j$}]=\alpha_j\beta_i, \]
\if Consistent with the literature, we assume slots are labeled such that
$\alpha_1>\alpha_2\dots>\alpha_m$ (slot 1 is referred to as the top slot) and
bidders are labeled so that $\beta_1b_1\geq\beta_2b_2\dots\geq\beta_nb_n$. \fi

The standard GSP auction was originally defined in this setting:

\begin{definition}
The {\em generalized second price} auction (GSP) proceeds as follows:
\begin{enumerate}
\item Bidders are sorted by $\beta_ib_i$; the bidder ranked $j$ gets slot $j$.
\item A bidder is charged a price-per-click equal to the minimum bid required to keep the same slot.
\end{enumerate}
\end{definition}

\section{GSP is Truthful}  \label{sec:truthful}

\subsection{Value Maximizers and GSP}

We previously argued that advertisers' objectives are well-modeled by value maximization subject to a ROI constraint. We also hinted that this was equivalent to our simple value maximizer model, which we now formalize:
\begin{lemma}\label{lem:svm-roi}
If a deterministic auction is truthful for simple value maximizers, it will be a dominant strategy for a value maximizer with value $v_i$ and ROI constraint $\gamma$ to report what it is willing to pay, $v_i'=\frac{v_i}{1+\gamma}$.
\end{lemma}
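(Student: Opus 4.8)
The plan is to reduce the ROI bidder to a simple value maximizer by a change of variables and then invoke the hypothesized truthfulness directly. Concretely, I would argue that the preference relation $\succeq_i$ of a value maximizer with value $v_i$ and ROI constraint $\gamma$ is \emph{literally the same relation} over outcome-price pairs as the preference relation of a simple value maximizer whose value is $v_i' = \frac{v_i}{1+\gamma}$. Once this identity of preferences is in hand the lemma is immediate: the hypothesis says that reporting its value is a dominant strategy for a simple value maximizer, so applying this to the simple value maximizer with value $v_i'$ shows that reporting $v_i'$ beats every other report under $\succeq_i^{\mathrm{SVM}(v_i')}$; since $\succeq_i^{\mathrm{SVM}(v_i')}$ and the ROI bidder's preference coincide, the very same chain of inequalities certifies that reporting $v_i'$ is dominant for the ROI bidder.

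To establish the identity of preferences I would check the clauses of the value-maximizing definition one at a time. The main step is the \emph{constraint}: rearranging $\frac{v_i - p_i}{p_i} \ge \gamma$ gives $v_i \ge (1+\gamma)p_i$, i.e. $p_i \le \frac{v_i}{1+\gamma} = v_i'$, which is exactly the constraint $p_i \le v_i'$ defining a simple value maximizer with value $v_i'$. Hence the two bidders deem precisely the same outcome-price pairs feasible (taking $p_i = 0$ to satisfy both constraints by convention, so the boundary is handled identically), which in turn makes the ``any feasible outcome is preferred to one violating the constraint'' clause agree between them.

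For the remaining clauses I would observe that the value-ordering depends only on the allocation, not on the magnitude of the value. The strict-preference clause favors an outcome with strictly higher value, and since $v_i, v_i' > 0$ are positive multiples of one another, $v_i x_i(o) > v_i x_i(o')$ holds if and only if $x_i(o) > x_i(o')$ holds if and only if $v_i' x_i(o) > v_i' x_i(o')$; so both bidders simply prefer a higher allocation. The tie-breaking clause (prefer a lower price when the allocation is equal) is value-independent and therefore also coincides. Combined with the constraint equivalence, this yields $\succeq_i^{\mathrm{ROI}(v_i,\gamma)} = \succeq_i^{\mathrm{SVM}(v_i')}$ as relations, completing the reduction.

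I expect the only delicate point to be making the ``identity of preferences'' fully rigorous rather than merely suggestive: one must verify the coincidence clause-by-clause (constraint, strict value preference, tie-break, and infeasibility) and take care of the degenerate price $p_i = 0$, where the ROI ratio is formally undefined. The determinism hypothesis enters precisely here --- it guarantees that each report $b_i$ induces a single well-defined pair $(f(b_{-i},b_i), p_i(b_{-i},b_i))$, so that comparing two reports is a comparison of two outcome-price pairs, to which the (identical) preference relations apply directly; without determinism one would be forced to compare lotteries, for which the value-maximizing preference is not even defined.
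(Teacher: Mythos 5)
Your proposal is correct and takes exactly the route the paper does --- indeed the paper never writes out a formal proof of this lemma, relying instead on the observation (stated informally in the introduction) that rearranging $\frac{v_i-p_i}{p_i}\geq\gamma$ into $p_i\leq\frac{v_i}{1+\gamma}$ makes the ROI bidder ``look like a simple value maximizer with a different value.'' Your clause-by-clause verification that the two preference relations coincide (constraint, strict value preference, tie-break, infeasibility, and the $p_i=0$ boundary), followed by transferring the dominant-strategy inequalities, is simply a rigorous rendering of that same reduction.
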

Remark that we no-longer expect bidders to report $v_i$ and instead expect they will report what they are willing to pay. If we desired a true direct revelation mechanism we could ask bidders to report $\gamma$ as well, but in practice it is more natural to report what one is willing to pay than what something is truly worth.

With this lemma in hand, it remains to show that GSP is truthful for simple value maximizers. Surprisingly, this is nearly trivial, and was observed by Aggarwal et al.~\cite{aggarwal09} in the context of a general preference model.\footnote{The proof via~\cite{aggarwal09} observes that value maximizers with ROI constraints can be mapped to their maximum price bidder model.} We sketch a proof from first principles for intuition:
\begin{theorem}[Application of Aggarwal et al.~\cite{aggarwal09}]
GSP is truthful\footnote{Effectively DSIC-AE (see Definition~\ref{def:dsic-ae}).} for value maximizers with ROI constraints, that is, it is a dominant strategy for a bidder to report the maximum price she is willing to pay.
\end{theorem}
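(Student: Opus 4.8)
The plan is to reduce the claim to the already-established Lemma~\ref{lem:svm-roi}, so that it suffices to show GSP is truthful (DSIC-AE) for \emph{simple} value maximizers, i.e.\ bidders who maximize allocation $x_i$ subject to $p_i \le v_i$ and break ties toward lower prices. By Lemma~\ref{lem:svm-roi}, a value maximizer with an ROI constraint $\gamma$ behaves exactly like a simple value maximizer whose value is $\frac{v_i}{1+\gamma}$, the maximum price she is willing to pay; so establishing truthfulness for simple value maximizers immediately yields the theorem, with the reported quantity being that maximum price. First I would fix a bidder $i$, fix the competing bids $b_{-i}$, and recall the key structural feature of GSP: because bidders are ranked by $\beta_i b_i$, the allocation $x_i(b_i, b_{-i})$ that bidder $i$ receives is (weakly) monotone nondecreasing in her own bid $b_i$, and the GSP price for any slot $j$ that $i$ could win equals the minimum bid required to hold that slot --- this price depends only on $b_{-i}$ and the slot, not on $b_i$ itself.

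The core argument is then a slot-by-slot comparison. For each attainable slot $j$, let $\pi_j(b_{-i})$ denote the (per-click) GSP price of that slot given the competitors' bids; since prices are the minimum bids to retain each slot and higher slots have higher CTRs $\alpha_j$, the prices are monotone in the same direction as the allocation. I would argue that truthful reporting $b_i = v_i$ wins exactly the best slot whose price does not exceed $v_i$: any slot with a strictly higher allocation has price strictly above $v_i$ and hence \emph{violates} the simple value maximizer's constraint $p_i \le v_i$, so by the value-maximizing preference definition that bidder strictly disprefers it; and any slot with the same or lower allocation is won at a price at least as high under any deviation that secures it, so by the tie-breaking clause (lower price preferred at equal allocation) the truthful outcome is weakly preferred. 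Concretely, truthful bidding attains the highest-CTR slot affordable under the constraint at its minimum price, which is precisely the most-preferred feasible outcome for a simple value maximizer.

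I expect the main obstacle to be the boundary/tie case, which is exactly why the statement is phrased as DSIC-AE rather than DSIC. When $v_i$ coincides exactly with a slot's threshold price $\pi_j(b_{-i})$, the constraint $p_i \le v_i$ holds with equality and the ranking comparison $\beta_i b_i$ versus a competitor's $\beta_k b_k$ can tie, so the allocation is ambiguous and a deviation might weakly help by breaking the tie favorably. The clean way to handle this is to observe that the set of values $v_i$ for which such an exact coincidence occurs is finite (one threshold per slot, hence at most $m$ critical values) given fixed $b_{-i}$, so it has Lebesgue measure zero; outside this measure-zero set every inequality above is strict and truthfulness holds in the full DSIC sense. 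This is precisely the role of Definition~\ref{def:dsic-ae}, and the remark about minimum bid increments explains how one would alternatively dispel the ambiguity entirely. I would close by noting that this first-principles argument recovers the mapping of Aggarwal et al.~\cite{aggarwal09}, whose maximum-price bidders coincide with our simple value maximizers in the relevant limit.
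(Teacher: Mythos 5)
Your proposal is correct and takes essentially the same approach as the paper: reduce ROI-constrained bidders to simple value maximizers via Lemma~\ref{lem:svm-roi}, fix $b_{-i}$, view GSP as offering a menu of slots at threshold prices independent of $i$'s own bid, and invoke DSIC-AE to dismiss the measure-zero set of values that coincide with a slot threshold. The only cosmetic difference is direction of emphasis: you verify sufficiency directly by checking upward/downward deviations (which is exactly the case analysis in the paper's rigorous version, Theorem~\ref{thm:vm-sp-char}), whereas the paper's sketch stresses the necessity side --- that any truthful price must be sandwiched at the slot-transition threshold, which is precisely the GSP price.
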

\begin{proof}[Sketch] (A rigorous generalization of this proof can be found in Theorem~\ref{thm:vm-sp-char}.) Note that the GSP price $p_i$ is the minimum bid $i$ could have submitted while maintaining the same rank in the auction.

First, assume that bidders are simple value maximizers. Fix a bidder $i$ and bids $b_{-i}$. By the taxation principle, we can think about bidder $i$ choosing among the slots at prices $p_{i,j}$. Let $V_{i,j}\subseteq\Re$ denote the set of value reports for which $i$ wins slot $j$ under a particular auction.

For a pricing to be truthful for a value maximizer, it should be that $i$ wins a particular slot $j$ if and only if (a) it is willing to pay $p_{i,j}$ for slot $j$ and (b) is not willing to pay $p_{i,j-1}$ for the slot immediately above it. This implies that (a) $p_j\leq\inf V_{i,j}$ and (b) $p_{j-1}\geq\sup V_{i,j-1}$ and thus
\[\sup V_{i,j+1}\leq p_{i,j}\leq\inf V_{i,j}\]
that is, $p_{i,j}$ is the threshold value (bid) at which $i$ moves from slot $j+1$ to slot $j$. This is precisely the GSP price.

Lemma~\ref{lem:svm-roi} extends this to value maximizers with an ROI constraint.
\end{proof}

\subsection{Robustness of Value Maximization}

Our next result aims to show that this minimal theoretical model is robust in a practical sense. Value maximization is intuitively an extreme behavior, so it would be unsurprising if small modeling changes dramatically changed our results. However, we find that the opposite happens in the presence of ROI constraints: a sufficient ROI constraint generally makes nearly any advertiser look like a value maximizer. An example is instructive:

\begin{example}
Suppose that a bidder has an ROI constraint of 1 and has choices between an outcome $o$ with value $v_ix_i(o)=1$ and an outcome $o'$ with value $v_ix_i(o')=10$. At any price $p'\leq\$9$, a profit-maximizing bidder will prefer $o'$ to $o$; however, since bidder $i$ has an ROI constraint of 1, she will only consider outcome $o'$ when $p'\leq\$5$. As a result, any time $o'$ is cheap enough for her to consider it, she will always prefer it to outcome $o$ --- in this example, {\em bidder $i$ is effectively a simple value maximizer with value $v_i'=\frac12v_i$.}
\end{example}

More generally, when outcomes in an auction lead to dramatically different allocations, a bidder must pay a very high price before a lesser outcome (lower value) would become preferable. In such cases, a mild ROI constraint --- which caps the price an advertiser is willing to pay --- will push a bidder towards simple value maximizing behavior.

To formalize this, we define a broad class of preference relations that captures everything from quasilinear to value-maximizing behaviors:
\begin{definition}
A preference relation $\prec_i$ is {\em super-quasilinear} if an advertiser who prefers a higher-value option under quasi-linear preferences also always prefers this option under $\prec_i$, i.e.
\[v_ix_i(o)\geq v_ix_i(o') \mbox{ and } (v_i-p_i)x_i(o)\geq (v_i-p_i')x_i(o')\Leftrightarrow (o,p_i)\succeq_i(o',p_i')\enspace.\]
\end{definition}

For super-quasilinear bidders we derive the following condition under which all preferences look value-maximizing:

\begin{theorem}\label{thm:robust}
When bidders have super-quasi-linear preferences $\prec_i$, an ROI constraint $\gamma$, and
\[v_ix_i(o)\frac{\gamma}{\gamma+1}\geq v_ix_i(o')\mbox{ for all }v_i,o,o'\mbox{ where }x_i(o)>x_i(o')\enspace,\]
no bidder wishes to lie in an auction that is truthful, is individually rational for value maximizers, and has no positive transfers.\footnote{No positive transfers means that the auctioneer never pays the bidders; individual rationality says that every bidder is at least as happy as if she had not participated at all.}
\end{theorem}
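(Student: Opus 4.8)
The plan is to reduce the claim to the already-established truthfulness of the auction for value maximizers with an ROI constraint, by showing that under the stated hypotheses a super-quasilinear bidder with ROI constraint $\gamma$ is \emph{behaviorally indistinguishable} from a value maximizer with the same constraint: on every menu of options the two pick the same one. First I would invoke the taxation principle together with the characterization behind Theorem~\ref{thm:vm-sp-char}: because the auction is truthful for value maximizers, individually rational for them, and has no positive transfers, each bidder $i$ faces (for fixed $b_{-i}$) a fixed menu of outcome--price pairs $(o,p_o)$ with every $p_o\geq 0$ and including the null option of not winning at price $0$, and her report only selects among these. A value maximizer with ROI constraint $\gamma$ selects, among the \emph{feasible} options satisfying $p_o\leq\frac{v_i}{1+\gamma}$, one of maximum value $v_ix_i(o)$ (lower price breaking ties); by Lemma~\ref{lem:svm-roi}, the report $v_i'=\frac{v_i}{1+\gamma}$ selects exactly this option.

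The crux is to show that a super-quasilinear bidder (who also carries the ROI constraint, so she too discards infeasible options via the value-maximizing constraint axiom) most prefers the same maximum-value feasible option. I would prove this pairwise. Fix two feasible options with $x_i(o)>x_i(o')$. Feasibility of $o$ gives $v_i-p_o\geq\frac{\gamma}{1+\gamma}v_i$, hence $(v_i-p_o)x_i(o)\geq\frac{\gamma}{1+\gamma}v_ix_i(o)$; the theorem's hypothesis supplies $\frac{\gamma}{1+\gamma}v_ix_i(o)\geq v_ix_i(o')$; and no positive transfers ($p_{o'}\geq 0$) gives $v_ix_i(o')\geq(v_i-p_{o'})x_i(o')$. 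Chaining,
\[(v_i-p_o)x_i(o)\geq\tfrac{\gamma}{1+\gamma}\,v_ix_i(o)\geq v_ix_i(o')\geq(v_i-p_{o'})x_i(o')\enspace,\]
so quasi-linear utility already (weakly) favors $o$; since also $v_ix_i(o)\geq v_ix_i(o')$, super-quasilinearity forces $(o,p_o)\succeq_i(o',p_{o'})$. Iterating shows the maximum-value feasible option dominates every lower-value one, and for options of equal allocation quasi-linear utility favors the cheaper, transferring the value-maximizer tie-break.

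Consequently the super-quasilinear bidder and the value maximizer share the same feasible set and the same top choice on every menu, while both rank any ROI-violating option below every feasible one. Hence the option obtained by reporting $\frac{v_i}{1+\gamma}$ — optimal for the value maximizer — satisfies $(o^\ast,p^\ast)\succeq_i(o,p_o)$ for every menu entry, so no deviation is strictly profitable and the bidder does not wish to lie; reporting what she is willing to pay remains a best response.

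The step I expect to be the main obstacle is not the inequality chain, which is routine, but cleanly invoking the menu/taxation characterization so that the three structural hypotheses do precisely the work of fixing a nonnegative-priced menu (with a null option) on which the value maximizer's optimal report is $\frac{v_i}{1+\gamma}$ — i.e. isolating where individual rationality and no positive transfers are each used. I would also handle the boundary where the ROI constraint binds with equality, and any ties arising there, via DSIC-AE, so that the equivalence of choice functions need only hold off a measure-zero set of $v_i$.
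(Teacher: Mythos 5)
Your proposal is correct and takes essentially the same route as the paper's proof: the same three-way comparison (lower-value options handled by the chain $(v_i-p_o)x_i(o)\geq\tfrac{\gamma}{1+\gamma}v_ix_i(o)\geq v_ix_i(o')\geq(v_i-p_{o'})x_i(o')$ via individual rationality/feasibility, the theorem's hypothesis, and no positive transfers; equal-allocation options handled by the price comparison forced by truthfulness for value maximizers; higher-allocation options handled by noting their price must violate the ROI constraint, so they rank below the null outcome and hence below truth). Your menu/taxation-principle framing and the DSIC-AE caveat are just a repackaging of the paper's direct comparison of the truthful outcome against an arbitrary deviation, not a genuinely different argument.
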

\begin{proof} Let $o$ and $p_i$ be the outcome and price that the bidder sees if it reports truthfully. Let $o'$ and $p_i'$ be any outcome and price that the bidder can achieve by lying.

First, suppose that $x_i(o')<x_i(o)$, i.e. bidder $i$ is lying to achieve an outcome with a lower value. Then we can use individual rationality
\[p_i\leq\frac{v_i}{\gamma+1}\]
and we know from no-positive-transfers that $p_i'\geq0$. Thus, by the conditions of the theorem, we know that
\[(v_i-p_i)x_i(o)\geq v_i\frac{\gamma}{\gamma+1}x_i(o)>v_ix_i(o')\geq (v_i-p_i')x_i(o')\enspace.\]
In words, a profit-maximizing bidder prefers truthful reporting at the maximum individually-rational price to lying, even if $o'$ happened for free. Since bidder $i$ has super-profit-maximizing preferences and $v_i(o)>v_i(o')$, it follows $(o,p_i)\succ_i(o',p_i')$.

Next, suppose that $x_i(o')=x_i(o)$. We know that the mechanism is truthful for value maximizers, so it must be that $p_i\leq p_i'$, otherwise a value maximizer would lie to achieve $o'$, $p_i'$. Since any super-quasilinear bider will prefer a lower price for the same allocation, we can conclude that $(o,p_i)\succeq_i(o', p_i')$.

Finally, suppose that $x_i(o')>x_i(o)$. Since the auction is truthful for value maximizers, we know that $p_i'>v_i$, otherwise a value maximizer would lie to achieve $o'$, $p_i'$. By individual rationality we know that any super-quasilinear bidder prefers $(0,0)$ an outcome $o'$ at price $p'>v$. Thus, $(o,p_i)\succeq_i(0,0)\succ_i(o',p_i')$, so bidder $i$ will therefore prefer to tell the truth.
\end{proof}

\subsubsection{Empirical Robustness}

To empirically evaluate Theorem~\ref{thm:robust}, we look at Yahoo marketplace data and ask {\em for what ROI constraint can we safely conclude that any super-quasilinear bidder would behave like a value maximizer?} If we directly apply the theorem to an the separable model we get the following conservative corollary:
\begin{corollary}[of Theorem~\ref{thm:robust}]
If $\alpha_i\geq\frac{\gamma}{\gamma+1}\alpha_{i+1}$ for all $i$ in a standard sponsored search auction, then no bidder with super-quasi-linear preferences and a ROI constraint of $\gamma$ wishes to lie under GSP pricing.
\end{corollary}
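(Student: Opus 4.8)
The plan is to obtain the corollary as a direct specialization of Theorem~\ref{thm:robust} to the separable CTR model, so that essentially all the work is in checking that GSP meets the three structural hypotheses of that theorem and that the spacing assumption on the $\alpha_j$ implies the abstract allocation condition. First I would verify the three structural properties. GSP is truthful for value maximizers with ROI constraints by the theorem preceding it (via Lemma~\ref{lem:svm-roi}), so it is truthful in the sense required. No positive transfers is immediate: the GSP price is the minimum bid that keeps bidder $i$'s rank, which equals the score of the bidder directly below divided by $\beta_i$ and is therefore nonnegative. Individual rationality for value maximizers follows because the charged price never exceeds the submitted bid, and a value maximizer with ROI constraint $\gamma$ bids exactly its willingness to pay $v_i/(1+\gamma)$; hence the realized price satisfies $p_i \le v_i/(1+\gamma)$, i.e.\ the ROI constraint $\frac{v_i-p_i}{p_i}\ge\gamma$ is met and the bidder weakly prefers participating to the outside option $(0,0)$.

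The substantive step is to show that the hypothesis of Theorem~\ref{thm:robust}, namely $\frac{\gamma}{\gamma+1}\,v_i x_i(o) \ge v_i x_i(o')$ whenever $x_i(o) > x_i(o')$, holds in the separable model under the stated spacing assumption. Here the feasible allocations for bidder $i$ are $x_i = \alpha_j \beta_i$ as $i$ ranges over the slots it could win, together with $x_i = 0$ when $i$ is unshown. I would first reduce to consecutive slots: since every ratio $\alpha_{j+1}/\alpha_j$ is bounded by the same factor $\frac{\gamma}{\gamma+1} < 1$, the inequality for adjacent slots chains multiplicatively to all pairs $j < k$, and the case $x_i(o')=0$ is trivial. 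Thus it suffices that $\frac{\gamma}{\gamma+1}\alpha_j \ge \alpha_{j+1}$ for every consecutive pair, and the common factor $\beta_i>0$ cancels, so the condition is independent of $v_i$ and $\beta_i$. Feeding this into Theorem~\ref{thm:robust} then yields that no super-quasilinear bidder with ROI constraint $\gamma$ wishes to deviate under GSP pricing.

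The main obstacle, and the point I would be most careful about, is the exact form of the spacing constant. Writing out the adjacent-slot requirement $\frac{\gamma}{\gamma+1}\alpha_j \ge \alpha_{j+1}$ shows the hypothesis must force $\alpha_j$ to exceed $\alpha_{j+1}$ by the factor $\frac{\gamma+1}{\gamma}$, i.e.\ the condition to verify is most naturally stated as $\alpha_j \ge \frac{\gamma+1}{\gamma}\alpha_{j+1}$; since the $\alpha_j$ are already strictly decreasing, the reciprocal form would impose no real restriction, so the direction of the inequality is exactly what makes the allocation gap large enough to invoke the theorem. Once that constant is pinned down, the remainder is the routine multiplicative chaining argument together with the mechanical check that GSP's price formula delivers the three structural hypotheses, neither of which presents genuine difficulty.
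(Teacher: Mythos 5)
Your proposal is correct and matches the paper's (implicit) proof: the paper gives no separate argument for this corollary beyond ``directly applying'' Theorem~\ref{thm:robust} to the separable model, which is exactly what you do --- checking that GSP is truthful, individually rational, and has no positive transfers for value maximizers with ROI constraints, and reducing the theorem's allocation-gap hypothesis to the adjacent-slot inequality (where plain monotonicity of the $\alpha_j$'s already handles non-adjacent pairs, so the multiplicative chaining is more than needed). You are also right to flag the constant: as literally written, $\alpha_i\geq\frac{\gamma}{\gamma+1}\alpha_{i+1}$ is vacuous for decreasing CTRs, and the condition the theorem actually requires (and the authors evidently intended) is $\frac{\gamma}{\gamma+1}\alpha_i\geq\alpha_{i+1}$, i.e.\ $\alpha_i\geq\frac{\gamma+1}{\gamma}\alpha_{i+1}$.
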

This claim gives conditions under which no bidder has an incentive to lie, independent of other bidders' bids; however, reality may be even stricter. Given bids, we can ask a simpler question of whether any bidder could possibly prefer to lie under current marketplace conditions:
\begin{lemma}\label{lem:native-robust}
When no two bidders have the same score $\beta t$, and
\[\forall i,\; \frac{\alpha_i}{\alpha_i-\alpha_{i+1}}-\frac{\alpha_{i+1}}{\alpha_i-\alpha_{i+1}}\frac{\beta_{i+2}b_{i+2}}{\beta_{i+1}b_{i+1}}<\gamma\enspace,\]
no bidder with super-quasi-linear preferences and a ROI constraint of $\gamma$ wishes to lie under GSP pricing.
\end{lemma}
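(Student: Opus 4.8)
The plan is to reduce the problem to the single dangerous deviation type and then sharpen the corresponding step of Theorem~\ref{thm:robust} using the \emph{actual} GSP price of the lower slot instead of the pessimistic bound ``$p_i'\ge 0$.'' Label bidders by score rank so that bidder $j$ occupies slot $j$, and write $s_j=\beta_j b_j$. By Lemma~\ref{lem:svm-roi} the willingness-to-pay bid of a value maximizer with ROI constraint $\gamma$ is $b_i=\frac{v_i}{1+\gamma}$, so a bidder sitting in slot $i$ has true value $v_i=(1+\gamma)b_i=(1+\gamma)s_i/\beta_i$. Under GSP the price for slot $i$ is $p_i=s_{i+1}/\beta_i$, and if bidder $i$ drops to slot $k\ge i$ (pushing bidders $i{+}1,\dots,k$ up one slot) the occupant just below becomes bidder $k{+}1$, so the price in slot $k$ is $p_{i,k}=s_{k+1}/\beta_i$ and the allocation is $\alpha_k\beta_i$. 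I would first dispose of the two easy cases exactly as in Theorem~\ref{thm:robust}: a \emph{same-slot} lie cannot lower the price (truthfulness for value maximizers), and an \emph{upward} lie to any slot $j<i$ costs $s_j/\beta_i\ge s_i/\beta_i=b_i=\frac{v_i}{1+\gamma}$ per click, which strictly violates the ROI constraint (using no ties), so a super-quasilinear bidder prefers $(0,0)$. Neither case needs the lemma's hypothesis; only \emph{downward} deviations remain.

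For downward deviations I would exploit the super-quasilinearity structure: since every lower slot has strictly smaller allocation than slot $i$, it suffices to show that the \emph{quasilinear} profit is nonincreasing as bidder $i$ steps down, for then slot $i$ has both higher value and higher quasilinear profit than any lower slot and super-quasilinearity forces $(o,p_i)\succeq_i(o',p_i')$. Define $g(k)=\bigl(v_i-p_{i,k}\bigr)\alpha_k\beta_i=v_i\alpha_k\beta_i-s_{k+1}\alpha_k$ for $k\ge i$. The single-step difference is
\[ g(k)-g(k+1)=v_i\beta_i(\alpha_k-\alpha_{k+1})-s_{k+1}\alpha_k+s_{k+2}\alpha_{k+1}. \]
Substituting $v_i\beta_i=(1+\gamma)s_i$ and using the monotonicity $s_i\ge s_{k+1}$ (valid for every $k\ge i$ since bidder $i$ ranks above all of them), this is at least $(1+\gamma)s_{k+1}(\alpha_k-\alpha_{k+1})-s_{k+1}\alpha_k+s_{k+2}\alpha_{k+1}$, whose nonnegativity is exactly the displayed inequality at index $k$ (rearranging, it asks $\frac{\alpha_k}{\alpha_k-\alpha_{k+1}}-\frac{\alpha_{k+1}}{\alpha_k-\alpha_{k+1}}\frac{s_{k+2}}{s_{k+1}}\le 1+\gamma$, which the hypothesis $(\cdot)<\gamma$ implies comfortably).

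With each single step handled, I would telescope: $g(i)-g(j)=\sum_{k=i}^{j-1}\bigl(g(k)-g(k+1)\bigr)\ge 0$ for every $j>i$, so the quasilinear profit at the truthful slot dominates that of every lower slot simultaneously, and super-quasilinearity yields that no downward lie is preferred. Combining with the same-slot and upward cases completes the argument; the boundary locus $s_i=s_{i+1}$ sits in a measure-zero set, consistent with the no-ties hypothesis and DSIC-AE. The main obstacle is precisely the passage from adjacent to non-adjacent deviations: the hypothesis only constrains neighboring slots, so I must argue that local optimality implies global optimality. The enabling observation is that $s_i\ge s_{k+1}$ lets every single-step difference be bounded below by the \emph{marginal} (index-$k$) quantity no matter how far bidder $i$ has already descended, which is what makes the per-index hypothesis suffice to rule out all downward moves at once; the secondary subtlety is pinning the worst case to the marginal value $v_i\to(1+\gamma)s_{i+1}/\beta_i$ and checking the two non-downward cases hold for GSP unconditionally.
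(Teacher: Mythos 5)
Your proof is correct, and it is essentially the argument the paper gestures at: the paper omits its own proof of Lemma~\ref{lem:native-robust}, saying only that it is ``similar to Theorem~\ref{thm:robust}'' (deferring details to the earlier workshop version~\cite{cavallo15}), and your write-up is exactly that case analysis with the one sharpening the lemma requires. Where Theorem~\ref{thm:robust} handles lower-allocation deviations by pessimistically assuming the lesser outcome is free ($p_i'\geq 0$) --- which is what forces its strong multiplicative-gap hypothesis on the $\alpha$'s --- you substitute the \emph{actual} GSP price $s_{k+1}/\beta_i$ of each lower slot, bound the adjacent-slot profit difference using $s_i\geq s_{k+1}$, and telescope; the same-slot and upward cases go through verbatim as in the theorem (the latter using the no-ties hypothesis for strict ROI violation). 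Three small points. First, your derivation shows the hypothesis is only needed at threshold $1+\gamma$, i.e.\ $\frac{\alpha_k}{\alpha_k-\alpha_{k+1}}-\frac{\alpha_{k+1}}{\alpha_k-\alpha_{k+1}}\frac{s_{k+2}}{s_{k+1}}\leq 1+\gamma$ suffices, so the lemma's stated condition ($<\gamma$) carries a unit of slack; this should be stated as a feature rather than left parenthetical, lest a reader suspect an off-by-one error in your algebra. Second, your closing prose about ``pinning the worst case to $v_i\to(1+\gamma)s_{i+1}/\beta_i$'' is inconsistent with the displayed math, which (correctly) uses $v_i\beta_i=(1+\gamma)s_i\geq(1+\gamma)s_{k+1}$; keep the latter. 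Third, you omit two boundary deviations: bidding low enough to win nothing at all, which needs $g(i)\geq 0$ and follows from $g(i)=\alpha_i\bigl((1+\gamma)s_i-s_{i+1}\bigr)>0$, and the indexing convention $s_k=0$ for $k$ past the last bidder so the final telescoping step is well defined. Both are one-line fixes and do not affect the substance.
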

The proof (omitted) is similar to Theorem~\ref{thm:robust} and a version can be found in an earlier version of this work~\cite{cavallo15}.

We tested this lemma empirically by looking at bid data for the slot
auctions on Yahoo's homepage stream. We took a dataset consisting of over one
hundred thousand auctions from a brief period of time within a single day.

Our results are striking --- if bidders require an ROI of 1, then 80\% of auctions would be such that no bidder can benefit by lying under GSP pricing. This strongly suggests that GSP may in fact be the appropriate auction for this setting. See Figure~\ref{fig:emp}.

\begin{figure}[h!]
\centering  
\hspace{-0mm}
\epsfxsize=3.5in \epsfbox{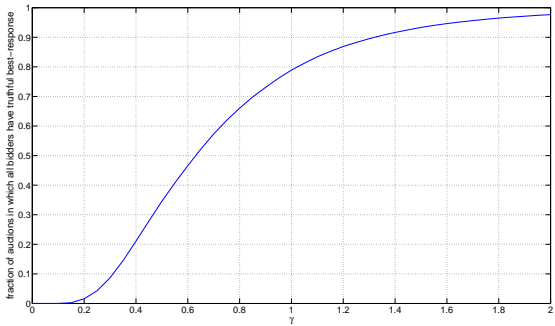}
\caption{\label{fig:emp} Illustration of the proportion of auctions in which truthtelling is a best-response for every advertiser, given the bids of all others, assuming all bidders have a ROI constraint of $\gamma$. At $\gamma=1$, 80\% of auctions are such that nobody should lie; as $\gamma$ approaches 2, virtually all auctions satisfy Lemma~\ref{lem:native-robust}. This is derived from a dataset of auctions from Yahoo's homepage stream, in which slot advertisements are interspersed in a stream of rich content links.}
\vspace{-0mm}
\end{figure}

\section{GSP Defined and Generalized}  \label{sec:general}

A significant byproduct of our results in Section~\ref{sec:truthful} is that we now have a theoretically justified axiomatic definition of GSP:
\begin{proposition}
\label{prop:ggsp}
The generalized second price auction (GSP) is the auction that maximizes expected bidder value and charges prices truthful for simple value maximizers.
\end{proposition}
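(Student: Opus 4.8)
The plan is to read the two axioms as pinning down, in sequence, the allocation and then the payments of GSP, so the proposition is really a specialization of the value-maximizer characterization sketched above (and made rigorous in Theorem~\ref{thm:vm-sp-char}) to the separable CTR model. I would prove both directions: that GSP satisfies the two axioms, and that it is the \emph{unique} auction that does. The first direction is immediate --- GSP's ranking maximizes reported value (see below) and its prices are truthful for simple value maximizers by the theorem of the previous subsection (taking the ROI constraint $\gamma=0$ in Lemma~\ref{lem:svm-roi}) --- so the content is uniqueness.

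First I would fix the allocation from the value-maximization axiom. Writing scores $s_k=\beta_k b_k$, the expected reported value of an assignment of bidders to slots is $\sum_j \alpha_j s_{\sigma(j)}$, where $\sigma(j)$ is the bidder placed in slot $j$. Since the slot qualities satisfy $\alpha_1>\alpha_2>\cdots>\alpha_m$, a one-line exchange argument (the rearrangement inequality) shows this sum is maximized precisely when bidders are sorted in decreasing order of $s_k$ and matched to slots $1,2,\dots$ in that order. This is exactly GSP's ranking rule, and it is the unique maximizer except when two bidders tie in score, a measure-zero event that is absorbed by the DSIC-AE relaxation. Hence the allocation is forced.

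With the allocation fixed, I would pin down the payments from the second axiom exactly as in the taxation-principle argument of the earlier sketch. Fix $i$ and the others' bids $b_{-i}$, and let $s_{-i,(j)}$ denote the $j$-th largest score among the other bidders. Monotonicity of the value-maximizing allocation in $\beta_i v_i$ makes the slot awarded to $i$ a step function of its report, jumping to a better slot each time $\beta_i v_i$ crosses one of the thresholds $\tau_{i,j}=s_{-i,(j)}/\beta_i$; concretely $i$ wins slot $j$ for $v_i\in(\tau_{i,j},\tau_{i,j-1})$. By the taxation principle bidder $i$ effectively selects from a menu of (slot, price) pairs $(j,p_{i,j})$, and a simple value maximizer picks the highest-value (smallest-index) slot it can afford, i.e.\ with $p_{i,j}\le v_i$. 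Matching the value maximizer's optimal choice to the forced allocation at every $v_i$ gives two requirements: on the winning region of slot $j$ the slot must be affordable, forcing $p_{i,j}\le\tau_{i,j}$; and on that same region the strictly better slot $j-1$ must be unaffordable, forcing $p_{i,j-1}\ge\tau_{i,j-1}$, while affordability of slot $j-1$ on \emph{its} winning region forces $p_{i,j-1}\le\tau_{i,j-1}$. These squeeze every price to
\[p_{i,j}=\tau_{i,j}=\frac{s_{-i,(j)}}{\beta_i},\]
which is exactly the GSP price --- the minimum bid that keeps slot $j$. I would stress that it is value maximization, not quasilinearity, that produces this pricing: the affordability cutoff coincides with the transition value, whereas the same menu argument under quasilinear utility would instead integrate to the VCG price.

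The main obstacle I expect is not the algebra but the boundary bookkeeping. At a threshold value $v_i=\tau_{i,j}$ the derived price equals the value, so the bidder is indifferent between the two adjacent slots and the ``lower price when value is equal'' tie-break does not resolve the allocation; likewise the value-maximizing allocation is ambiguous exactly when two scores coincide. Both are measure-zero sets of reports, and handling them cleanly is precisely what the DSIC-AE definition (Definition~\ref{def:dsic-ae}) is for, so I would invoke it to discard these ties rather than trying to define behavior on them. The remaining care is simply to treat the bottom of the menu --- ranks below $m$ receive no slot at price $0$, consistent with the threshold $\tau_{i,m}$ --- and to confirm that the forced prices are monotone across slots, which they are since $s_{-i,(1)}>s_{-i,(2)}>\cdots$.
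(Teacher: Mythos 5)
Your proposal is correct and follows essentially the same route as the paper: the paper justifies Proposition~\ref{prop:ggsp} via the taxation-principle threshold squeeze in its Section~3 proof sketch together with the general characterization of Theorem~\ref{thm:vm-sp-char}, and your argument is exactly that specialization to the separable model (sorting by score for the allocation, squeezing each $p_{i,j}$ to the transition threshold for the prices), with the same use of DSIC-AE to dispose of measure-zero ties. The only difference is presentational --- you make the allocation-uniqueness step (rearrangement inequality) and the bottom-of-menu case explicit, which the paper leaves implicit.
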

This definition is important because it allows us to soundly extend GSP to new contexts. A small existing literature has struggled to generalize GSP in the context of quasilinear bidders, generally finding equilibria may not exist and that even when they do, they may not be efficient~\cite{deng10,cavallo14,bachrach14}. Our work suggests that these negative results are because GSP's strength was misunderstood.

Generalizing GSP is an important problem because ads are becoming progressively more complex, and with them the algorithms required to optimize their placement and features. The traditional separable model is a poor fit for these settings, so GSP does not immediately apply and the industry has typically moved to a VCG auction due to its purported truthfulness and elegant implementation. Facebook and Google's Contextual Ads~\cite{varian14} adopted VCG for precisely these reasons. However, those implementations of VCG were straightforward because they were largely from scratch; VCG is a more problematic prescription for established marketplaces like sponsored search that are currently based on GSP. Advertisers are accustomed to GSP prices, so any switch will force advertisers to revise their expectations. Moreover, GSP prices are higher than their VCG counterparts, so to the degree that advertisers do not react, there may be a significant revenue loss for the advertising platform.  A small research thread has explored ways to transition smoothly from GSP to VCG (e.g.~\cite{bachrach15}), again assuming bidders maximize profit.

\subsection{Single Parameter Generalizations}

To generalize GSP we will characterize the allocations that can be implemented truthfully for value maximizers. In quasilinear settings, it is known that any monotone allocation rule is incentive compatible~\cite{myerson81, archer01}; we find that the same is true for value maximizers. This auction has very simple prices: the prices is the minimum values required to maintain the same allocation $x$:

\begin{theorem}\label{thm:vm-sp-char}
For simple value maximizers in a single parameter domain, a mechanism is truthful (DSIC-AE) if and only if it is monotone and the price is the minimum value required to get the same allocation:
\begin{itemize}
\item {\em (monotonicity)} for any $v_{-i}$, $x_i(z, v_{-i})> x_i(v)$ if and only if $z>v_i$ almost everywhere over $v_i$
\item {\em (pricing)} $p_i(v)=\inf_{z|x_i(z, v_{-i})=x_i(v)}z$
\end{itemize}
\end{theorem}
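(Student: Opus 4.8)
The plan is to fix a bidder $i$ and the others' reports $v_{-i}$ and invoke the taxation principle exactly as in the preceding GSP sketch: bidder $i$ effectively selects from a menu of allocation--price pairs indexed by her own report $b$, where $g(b) := x_i(b, v_{-i})$ and $P(b) := p_i(b, v_{-i})$. A simple value maximizer with true value $v_i$ then has a transparent best response: among all reports $b$ with $P(b) \le v_i$ (constraint satisfied) she picks one of maximum allocation $g(b)$, breaking ties toward the smallest price. Truthfulness (DSIC-AE) is precisely the statement that $b = v_i$ is such a maximizer for almost every $v_i$. I would prove the two implications separately, generalizing the interval/threshold picture of the sketch (the sets $V_{i,j}$ there become the preimages $\{b : g(b) = \ell\}$ here).

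For the \emph{sufficiency} direction (monotone $+$ threshold pricing $\Rightarrow$ truthful), assume $g$ is nondecreasing and $P(b) = \inf\{z : g(z) = g(b)\}$. Bidding truthfully yields allocation $g(v_i)$ at price $P(v_i) = \inf\{z : g(z) = g(v_i)\} \le v_i$, so the constraint holds. I would then check the three deviation types. If a report $b$ strictly raises the allocation, then by monotonicity its threshold price $\inf\{z : g(z) = g(b)\}$ exceeds $v_i$, so that outcome violates the budget constraint and a value maximizer rejects it. If $b$ keeps the same allocation, threshold pricing gives the identical price, so the bidder is indifferent. If $b$ lowers the allocation, the value strictly drops while the constraint still holds, so truthful is strictly better. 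Hence $b = v_i$ is optimal off a measure-zero set of threshold values, which is exactly DSIC-AE.

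For the \emph{necessity} direction (truthful $\Rightarrow$ monotone $+$ threshold pricing), I would proceed in three steps. First, individual rationality: if $P(v_i) > v_i$ the truthful outcome violates the constraint, and since some low report yields a feasible (e.g.\ zero-allocation, zero-price) outcome that a value maximizer strictly prefers, DSIC forces $P(v_i) \le v_i$ almost everywhere. Second, monotonicity: for $v_i < v_i'$, the feasibility just established makes the report $v_i$ affordable for the $v_i'$-bidder, so her incentive constraint combined with the preference for higher allocation yields $g(v_i') \ge g(v_i)$; thus $g$ is nondecreasing, which is the content of the monotonicity clause (the strictness in the ``iff'' is only the a.e.\ statement that the allocation can increase solely across report thresholds). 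Third, pricing: comparing two reports with equal allocation, the two bidders' tie-breaking-toward-lower-price constraints squeeze their prices to be equal, so $P$ is constant on each maximal interval $\{b : g(b) = \ell\}$; call the level's price $P^\star$ and its left endpoint $T = \inf\{b : g(b) = \ell\}$. Individual rationality applied to reports just above $T$ gives $P^\star \le T$, while the incentive constraint of a bidder just below $T$ --- who must not profitably jump up to the higher allocation at price $P^\star$ --- gives $P^\star \ge T$. Therefore $P^\star = T = \inf\{z : g(z) = g(v_i)\}$, the claimed threshold price.

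The main obstacle is the boundary bookkeeping that the DSIC-AE formulation is designed to absorb: the allocation $g$ is a step function, so at the finitely many report thresholds a positive-probability tie in allocation occurs only when the bidder's value lands exactly on an endpoint, and the ``$>$'' versus ``$\ge$'' in the monotonicity clause together with the choice of open/closed allocation intervals must be reconciled with the infima defining the prices. I expect the necessity direction's final squeeze ($P^\star = T$) to require the most care, since it hinges on taking reports arbitrarily close to the threshold from both sides and on the value maximizer's strict preference for a feasible higher-allocation outcome; the measure-zero exception in DSIC-AE is exactly what lets these one-sided limits go through without contradiction at the endpoints.
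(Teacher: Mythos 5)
Your proposal is correct and follows essentially the same route as the paper's proof: the same behavioral reduction (the bidder maximizes allocation subject to $P(b)\le v_i$, ties toward lower price), the same three-deviation case analysis for sufficiency with the overbidding failure confined to discontinuities of the monotone allocation and absorbed by the measure-zero allowance of DSIC-AE, and the same squeeze for necessity, with the price trapped between the supremum of reports getting a strictly smaller allocation and the infimum of reports getting an equal allocation. The only cosmetic difference is that you derive monotonicity explicitly (via affordability of a lower type's report) before running the price squeeze, whereas the paper extracts monotonicity and the threshold price simultaneously from the consistency of those same two bounds.
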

Before we prove our characterization we note an important corollary:
\begin{corollary}
For any monotone $x_i$, the mechanism that charges $p_i(v)=\inf_{z|x_i(z, v_{-i})=x_i(v)}z$ is DSIC-AE.
\end{corollary}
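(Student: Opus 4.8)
The plan is to observe that this corollary is exactly the \emph{``if''} direction of the biconditional in Theorem~\ref{thm:vm-sp-char}. Since we start from a monotone allocation rule $x_i$ and \emph{define} the price by the stated formula, both bullet conditions of the theorem hold by construction (the pricing bullet tautologically, the monotonicity bullet by hypothesis), so DSIC-AE is immediate. No independent argument is strictly required; but to keep the reduction self-contained I would spell out why meeting those two conditions forces truthfulness for a simple value maximizer.

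To do so I would fix a bidder $i$, true value $v_i$, and competing reports $v_{-i}$, and invoke the taxation principle so that choosing a bid amounts to selecting from the induced menu of outcome--price pairs $\{(x_i(z,v_{-i}),\,p_i(z,v_{-i}))\}_z$. First I would verify individual rationality of the truthful report: because $v_i$ itself belongs to $\{z : x_i(z,v_{-i})=x_i(v)\}$, the infimum defining $p_i(v)$ is at most $v_i$, so the truthful outcome meets the value maximizer's budget constraint $p_i\le v_i$. Next, any deviation to a strictly higher allocation requires a strictly larger bid, and by monotonicity the threshold price for that higher level exceeds $v_i$, violating the budget constraint; hence the value maximizer cannot profitably move up. Any deviation to the \emph{same} allocation yields the identical price $p_i(v)$, since the infimum depends only on the allocation level, leaving the bidder indifferent. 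Finally, any deviation to a strictly lower allocation is dispreferred outright, because a value maximizer ranks higher value above any price savings. Chaining these cases shows the truthful report attains the highest feasible allocation at its minimum price, which is the value maximizer's top menu point.

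The one genuinely delicate step, and the reason DSIC-AE rather than DSIC appears, is the measure-zero exception: when $v_i$ lands exactly on a threshold where $x_i(\cdot,v_{-i})$ jumps, the truthful report sits at a tie, and the inequality ``the next allocation costs strictly more than $v_i$'' degrades to an equality. I would dispatch this by noting that a monotone rule $x_i(\cdot,v_{-i})$ has at most countably many jump points, so the offending set of $v_i$ has measure zero and is precisely the set excused by Definition~\ref{def:dsic-ae}. The only conceptual point to flag is that the bare word ``monotone'' in the corollary is intended in the sense of the theorem's monotonicity bullet; granting that identification, the corollary follows directly from Theorem~\ref{thm:vm-sp-char}.
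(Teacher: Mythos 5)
Your proposal is correct and takes essentially the same route as the paper: the corollary is read off as the sufficiency direction of Theorem~\ref{thm:vm-sp-char}, and your three-case verification (lower allocation dispreferred given $p_i(v)\le v_i$, equal allocation gets the identical price, higher allocation priced at or above $v_i$) mirrors the paper's own sufficiency argument within that theorem. Your handling of the delicate step --- that the only failures occur at the countably many discontinuities of the monotone allocation, a measure-zero set excused by DSIC-AE --- is precisely the paper's stated justification for the corollary.
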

This corollary follows because the $\inf$ only differs from the $\min$ where $x$ is discontinuous, and a monotone function is continuous almost everywhere (discontinuities are countable and therefore have measure zero).

\begin{proof}[of Theorem~\ref{thm:vm-sp-char}.] Fix other bidders' values $v_{-i}$ and drop them for clarity.

{\em Necessity.} Fix a type $v_i$ and define
\[\mathcal{Z}^<=\{z|x_i(z)<x_i(v_i)\}\quad\mbox{and}\quad \mathcal{Z}^==\{z|x_i(z)=x_i(v_i)\}\]
to be the types that get a strictly smaller allocation than $v_i$ and an equal allocation respectively. A value maximizer with single parameter type $v_i$ will choose a bid $b_i$ that maximizes $x_i(b_i)$ subject to $p_i(b_i)\leq v_i$. Thus, for any $v_i$, $p_i(v_i)$ must be high enough that bidders who get a smaller allocation do not want to lie, i.e.
\[p_i(v_i)\geq \underline p_i\triangleq\sup \mathcal{Z}^<\enspace.\]
Similarly, all types who exactly get $x_i(v_i)$ must pay the same price, and it must be that any bidder who gets $x_i(v_i)$ must be willing to pay for it, so
\[p_i(v_i)\leq\overline p_i=\inf \mathcal{Z}^=\enspace.\]
Since $\sup\mathcal{Z}^<\geq\inf\mathcal{Z}^=$, it must be that $\underline p_i\geq\overline p_i$, and therefore to satisfy $\underline p_i\leq p_i(v_i)\leq \overline p_i$ a mechanism will be truthful if and only if $\underline p_i=\overline p_i=p_i(v_i)$ --- this happens if and only if $x$ is nondecreasing and is the price defined in the theorem.

{\em Sufficiency.} Note that for DSIC-AE we only need to prove sufficiency almost everywhere. We must consider three deviations: (1) a bidder underbids for a smaller allocation, (2) a bidder deviates for the same allocation at a lower price, and (3) a bidder overbids for a larger allocation. Remark that deviation (1) will not happen because a value maximizer will always want the larger allocation (since $p_i(v_i)\leq v_i$ by definition of $p_i$), and that  (2) cannot happen because all types who get the same allocation pay the same price.

It remains to show that bidders almost never have an incentive to raise a bid (3). Deviation (3) will only be beneficial if there exists a type $z>v_i$ such that $x_i(z)>x_i(v_i)$ but $p_i(z)<v_i$. By definition of $p_i$, there exists a type $\underline b_i(z)\geq v_i$ such that
\[p_i(z)=\underline b_i(z)\enspace.\]
As long as $\underline b_i(z)>v_i$ this deviation cannot be beneficial. Unfortunately, when $b_i(z)=v_i$ then deviation (3) will be beneficial; however, this can only happen at a discontinuity in $x$. Since $x$ is monotone, discontinuities can only happen rarely (the set of discontinuities is countable and therefore has measure 0) and we get truthfulness in our almost-everywhere sense (DSIC-AE).
\end{proof}

\subsection{Substantiating GSP's Folk Definition}

While we have given the first rigorous axiomatic definition of GSP, the following folk description is well-known:
\begin{proposition}[Folklore]
GSP is the auction that maximizes expected bidder value and charges the minimum bid required to keep the same allocation.
\end{proposition}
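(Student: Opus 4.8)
The plan is to show that the two descriptions of GSP --- the axiomatic one from Proposition~\ref{prop:ggsp} (maximize bidder value, charge prices truthful for simple value maximizers) and the folk one (maximize bidder value, charge the minimum bid to keep the same allocation) --- specify identical allocation and pricing rules in the separable CTR model, so that both name the same auction. Since the two allocation rules agree by hypothesis (both maximize expected bidder value), the whole argument reduces to showing that, for the value-maximizing allocation, the price that is ``truthful for simple value maximizers'' coincides with ``the minimum bid required to keep the same allocation.''

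First I would observe that the separable model is a single-parameter domain: each bidder's type is the scalar value-per-click $v_i$, and the induced allocation is the click probability $x_i$. I would then identify the value-maximizing allocation explicitly. Maximizing $\sum_i b_i x_i=\sum_i b_i\alpha_{j(i)}\beta_i$ over assignments of bidders to slots is a rearrangement problem; since the $\alpha_j$ are decreasing, the rearrangement inequality forces the optimal assignment to pair the largest scores $\beta_i b_i$ with the largest $\alpha_j$, so the bidder ranked $k$-th by score $\beta_i b_i$ receives slot $k$. This is precisely the GSP allocation, and it is monotone in $v_i$: raising one's bid can only improve one's rank and hence one's slot, which weakly increases $\alpha_{j(i)}\beta_i=x_i$.

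With monotonicity in hand, I would invoke Theorem~\ref{thm:vm-sp-char} (or its corollary): for simple value maximizers in a single-parameter domain the unique truthful pricing for a monotone allocation is $p_i(v)=\inf_{z\,:\,x_i(z,v_{-i})=x_i(v)}z$, the smallest value report preserving the allocation $x_i$. It then remains to check that this infimum is the GSP price. Here I would use that, with $\beta_i$ fixed and the $\alpha_j$ strictly decreasing, the allocation $x_i=\alpha_j\beta_i$ is a strictly decreasing function of the slot index $j$, so preserving $x_i$ is equivalent to preserving slot $j$. The set of reports $z$ keeping bidder $i$ in slot $j$ is $\{z:\beta_{j+1}b_{j+1}\leq\beta_i z\leq\beta_{j-1}b_{j-1}\}$, and since the upper constraint does not bind from below, its infimum is $z=\beta_{j+1}b_{j+1}/\beta_i$ --- exactly the minimum bid needed to keep slot $j$, i.e. the GSP price-per-click. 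Because reports equal bids in the truthful mechanism, ``minimum value to keep the same allocation'' and ``minimum bid to keep the same allocation'' denote the same quantity.

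The main obstacle I anticipate is this final identification --- translating the abstract threshold $\inf\{z:x_i(z)=x_i(v)\}$ from the single-parameter characterization into the concrete ``minimum bid to keep the same slot.'' It hinges on the fact that in the separable model distinct slots yield distinct allocations (because the $\alpha_j$ are strictly ordered), so that ``same allocation'' and ``same slot'' coincide; were two slots allowed to share a CTR, this equivalence would fail and the two definitions could diverge. Everything else --- the rearrangement argument for the allocation, monotonicity, and the appeal to Theorem~\ref{thm:vm-sp-char} --- is routine, so the real content is that the separable structure makes the two threshold notions identical, after which the folk and axiomatic definitions coincide and both equal GSP via Proposition~\ref{prop:ggsp}.
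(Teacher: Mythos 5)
Your proposal is correct, but it takes a genuinely different (and less general) route than the paper. The paper treats this proposition as an essentially immediate consequence of Theorem~\ref{thm:vm-sp-char}: the folk pricing rule, ``the minimum bid required to keep the same allocation,'' is \emph{verbatim} the pricing $p_i(v)=\inf_{z:\,x_i(z,v_{-i})=x_i(v)}z$ of that theorem, and the folk allocation rule is the same value-maximizing allocation as in Proposition~\ref{prop:ggsp}; hence the folk auction and the axiomatic auction coincide in \emph{any} single-parameter domain, with no reference to slots, scores, or separability. You instead instantiate everything inside the separable CTR model: a rearrangement argument to show the value-maximizing allocation is sort-by-score, monotonicity, and the explicit computation that the threshold equals $\beta_{j+1}b_{j+1}/\beta_i$. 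Your route buys something the paper's observation does not spell out here: a direct identification of both definitions with the classical slot-based GSP of Section~\ref{sec:prelim} (the paper obtains that link separately, via the proof sketch of the Aggarwal-et-al.\ theorem in Section~\ref{sec:truthful}). But it loses the main point of Section~\ref{sec:general}: the folk definition is supposed to acquire ``theoretical teeth'' in arbitrary single-parameter domains where no separable structure exists, and an argument tied to strictly ordered $\alpha_j$'s cannot deliver that. One small misattribution in your closing caveat: distinct slot CTRs are needed only to identify ``same allocation'' with ``same slot,'' i.e., to match the classical slot-based definition; the folk and axiomatic definitions themselves prescribe the same pricing formula and cannot diverge from each other even when two slots share a CTR.
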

Until now, this definition was merely a pleasant-sounding heuristic for general auction settings; however, our work gives this definition theoretical teeth:
\begin{obs}
The folklore definition of GSP is equivalent to the truthful auction for value maximizers defined in Theorem~\ref{thm:vm-sp-char}.
\end{obs}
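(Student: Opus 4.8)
The plan is to observe that both definitions prescribe the \emph{same allocation rule} --- maximizing expected bidder value --- so the entire content of the claimed equivalence collapses to showing that the two \emph{pricing} rules describe the same number. Concretely, Proposition~\ref{prop:ggsp} already identifies GSP as the value-maximizing allocation paired with whatever pricing is truthful for simple value maximizers, and Theorem~\ref{thm:vm-sp-char} shows that truthful pricing is exactly $p_i(v)=\inf_{z\mid x_i(z,v_{-i})=x_i(v)}z$. Hence the only thing left to do is to check that the folklore phrase ``the minimum bid required to keep the same allocation,'' once written out formally, is precisely this infimum.

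First I would establish that the value-maximizing allocation is monotone, since that is what makes ``keep the same allocation'' a well-defined threshold. Fixing $v_{-i}$, let $o$ maximize $\sum_j b_j x_j(\cdot)$ at own-report $b_i$ and let $o'$ maximize it at $b_i'>b_i$. Combining the two optimality inequalities (they differ only in coordinate $i$, so all other terms cancel) yields $(b_i'-b_i)(x_i(o')-x_i(o))\ge 0$, whence $x_i(\cdot,v_{-i})$ is nondecreasing in the bidder's own report. This is the standard exchange argument and uses only that the objective is additively separable and linear in $b_i$.

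With monotonicity in hand I would argue that the set $S=\{z\mid x_i(z,v_{-i})=x_i(v)\}$ is an interval: because $x_i(\cdot,v_{-i})$ is nondecreasing, any $z$ lying between two points of $S$ must itself lie in $S$. Therefore ``the minimum bid that still yields the allocation $x_i(v)$'' is unambiguously $\inf S$, and any report strictly below $\inf S$ falls outside $S$ and hence (again by monotonicity) produces a \emph{strictly smaller} allocation --- which is exactly the meaning of ``required to keep the same allocation.'' This identifies the folklore price with $\inf_{z\mid x_i(z,v_{-i})=x_i(v)}z$, matching Theorem~\ref{thm:vm-sp-char} and completing the equivalence.

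The one place genuine care is required is the gap between ``minimum'' and ``infimum.'' Where $x_i$ is discontinuous the infimum need not be attained, so the folklore word ``minimum'' must be read as an infimum; this is the same countable, measure-zero discontinuity phenomenon already handled in the corollary to Theorem~\ref{thm:vm-sp-char}, and it is why the statement is naturally phrased in the DSIC-AE sense rather than exact DSIC. I would flag this explicitly, so the two descriptions are matched up to the set of discontinuities rather than claimed to coincide pointwise everywhere.
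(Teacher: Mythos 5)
Your proposal is correct, and it matches the intent of the paper exactly: the paper states this observation without any proof, treating it as an immediate consequence of Theorem~\ref{thm:vm-sp-char}, and your argument is precisely the formalization that is being left implicit --- the two allocation rules coincide by definition, the exchange argument gives monotonicity of the value-maximizing allocation (so the theorem applies and the threshold is well defined), and the folklore ``minimum bid required to keep the same allocation'' is the theorem's infimum once the min-versus-inf issue at discontinuities is absorbed into the DSIC-AE qualification. Your explicit handling of that last point (the interval structure of the equal-allocation set and the measure-zero set of discontinuities) is a genuine improvement in rigor over the paper's bare assertion.
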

By consequence, this folklore definition is also equivalent to Proposition~\ref{prop:ggsp}, closing the loop on defining GSP.

\section{Conclusion}  \label{sec:conclusion}

Our work paints GSP as a sound, practical, and elegant auction, nearly perfectly suited to the online advertising auctions it serves. Yet, it is undeniable that rejecting profit as an objective goes against the grain. Lest one finds earlier arguments unconvincing, we offer one concluding observation: {\em advertisers measure average costs, not marginal costs.} Every dashboard, spreadsheet, and presentation we have seen in practice starts by measuring performance in terms of the average cost per click, average value, and so on. Comparing averages measures ROI, not profit.

In light of our work, GSP almost appears as a happy coincidence. It wasn't designed to be truthful --- it wasn't even designed by auction experts as far as we know --- but it turned out to be a smashing success. Yet, perhaps GSP's original designer(s) had instinct that naturally led them to GSP, or perhaps in their na\"{i}vety they thought more like advertisers an auction expert would. This leaves a quiet, nagging question: what of that instinct might we still be missing?

\section{Acknolwedgements}

We would like to thank Prabhakar Krishnamurthy, who contributed to a precursor to this paper, as well as Sam Taggart, Mukund Sundararajan, and many anonymous reviewers for their helpful feedback.

\bibliographystyle{plain}
\bibliography{vm}

\end{document}